\newtheorem{theorem}{Theorem}
\newtheorem{lemma}[theorem]{Lemma}
\def\ba{{\bm a}}
\def\bb{{\bm b}}
\def\bc{{\bm c}}
\def\bd{{\bm d}}
\def\be{{\bm e}}
\def\bf{{\bm f}}
\def\bg{{\bm g}}
\def\bh{{\bm h}}
\def\bp{{\bm p}}
\def\bq{{\bm q}}
\def\br{{\bm r}}
\def\bs{{\bm s}}
\def\bt{{\bm t}}
\def\bu{{\bm u}}
\def\bv{{\bm v}}
\def\bw{{\bm w}}
\def\bx{{\bm x}}
\def\cD{{\mathcal D}}
\def\cX{{\mathcal X}}
\def\DEL#1{}
\begin{document}

\thispagestyle{empty}
\title{Two-Dimensional Source Coding by Means of Subblock Enumeration}

\author{
\authorblockN{Takahiro Ota}
\authorblockA{Dept.\ of Computer \& Systems Engineering\\
Nagano Prefectural Institute of Technology\\
813-8, Shimonogo, Ueda, Nagano, 386-1211, JAPAN\\
Email: \texttt{ota@pit-nagano.ac.jp}}
\and 
\authorblockN{Hiroyoshi Morita}
\authorblockA{Graduate School of Informatics and Engineering\\
The University of Electro-Communications\\
1-5-1, Chofugaoka, Chofu, Tokyo, 182-8585, JAPAN\\
Email: \texttt{morita@uec.ac.jp}}
}

\maketitle


\renewcommand{\thefootnote}{\arabic{footnote}}
\setcounter{footnote}{0}

\begin{abstract}
A technique of lossless compression via substring enumeration~(CSE)
attains compression ratios as well as popular lossless compressors
for one-dimensional (1D) sources.
The CSE utilizes a probabilistic model built from the circular string of an input source
for encoding the source.
The CSE is applicable to two-dimensional (2D) sources such as images by dealing with
a line of pixels of 2D source as a symbol of an extended alphabet.
At the initial step of the CSE encoding process, we need to output the number of occurrences
of all symbols of the extended alphabet, so that the time complexity increase exponentially
when the size of source becomes large.
To reduce the time complexity, we propose a new CSE which can encode a 2D source in
block-by-block instead of line-by-line. The proposed CSE utilizes the flat torus of an input 2D
source as a probabilistic model for encoding the source instead of the circular string of the source.
Moreover, we analyze the limit of the average codeword length of the proposed CSE for general sources.
\end{abstract}

\section{Introduction\label{intro}}
In 2010, Dub\'{e} and Beaudoin proposed an efficient off-line data compression algorithm
for a binary source known as {\em Compression via Substring Enumeration}~(CSE)~\cite{DB10}.
In~\cite{Yo11}, Yokoo proposed a universal CSE algorithm for a binary source
and various versions of the CSE for a binary source have been proposed so far~\cite{DY11, BD14, KYYK16}.
It is reported that performance of the CSE~\cite{BD14} is as well as that of 
an efficient off-line data compression algorithm using the Burrows-Wheeler transformation~(BWT)~\cite{BW94}.
In~\cite{OM13}, it is proved that an encoder, which is a deterministic finite automaton, of the CSE and an encoder
without sinks of the antidictionary coding~\cite{CMRS00} are isomorphic for a binary source.
Moreover, an antidictionary coding proposed in~\cite{OM141} provided  
the first CSE for $q$-ary ($q\!>\!2$) alphabet sources as a byproduct.
Iwata and Arimura 
proposed the modified algorithm and evaluated the maximum redundancy rate of the CSE for the $k$th order Markov sources~\cite{IA15}.

For encoding an input source, the CSE utilizes a probabilistic model built from 
the circular string which is obtained by concatenating the first symbol to the last symbol of the source.
A probabilistic model of the circular string is also useful for the BWT and antidictionary coding~\cite{OM13, OM141}, and in~\cite{CFMP16},
it is shown that an antidictionary built from the circular string is useful for genome comparison such as deoxyribonucleic acid~(DNA).
However, for a 2D source such as an image, computational time of the CSE is exponential
with respect to line length since the CSE works in line-by-line.
The CSE deals with a line of 2D source as a symbol of an extended alphabet.
At the initial step of the CSE encoding process, the CSE needs to output frequencies of all symbols of the extended alphabet.

To reduce the computational time, we propose a new CSE for a 2D source which utilizes the flat torus of an input 2D source 
as a probabilistic model instead of the circular string of the source. 
In the initial step, the total number of output blocks is constant
since the new CSE works in block-by-block. Moreover, we evaluate the limit of
the average codeword length of the proposed algorithm for general sources.

\section{Basic Notations and Definitions}\label{Notations}
\subsection{Alphabet and Block}
Let $\cX$ be a finite source alphabet $\{0, 1, \dots, J\!-\!1\}$ and let $|\cX|$ be a cardinality of $\cX$,
that is $|\cX| = J$.
Let $\cX^{[m,n]}$ be the set of all $m\!\times\!n$ finite blocks
$\bp=(p_{(i,j)})_{1\le i\le m, 1 \le j\le n}$ over $\cX$, where $p_{(i,j)} \in \cX$
is the element of $\bp$ at $(i,j)$-coordinate. 
Furthermore, let $\cX^{[*,*]}$ be $\cup_{m,n\ge 0}\cX^{[m,n]}, $
where $\cX^{[m,n]}$ includes the \emph{empty block} $\lambda^{[m, n]}$ 
when at least one of $m$ and $n$ is $0$.
For convenience, $\cX^{[m, 0]}$ and $\cX^{[0, n]}$ are defined as $\{\lambda^{[m, 0]}\}$
and $\{\lambda^{[0, n]}\}$, respectively. 
For $\bp \in \cX^{[*,*]}$, let $|\bp|_r$ and $|\bp|_c$
be the length of row (the \emph{height}) and the length of column (the \emph{width}), respectively.   
For example, when $\cX = \{0, 1\}$, Fig.~\ref{fig:ex33s} illustrates $\bp \in \cX^{[3, 3]}$ where $|\bp|_r\!=\!|\bp|_c\!=\!3$.

\begin{figure}[htb]
	\begin{minipage}{0.2\hsize}
		\centering
			\includegraphics[clip, width=0.8\linewidth]{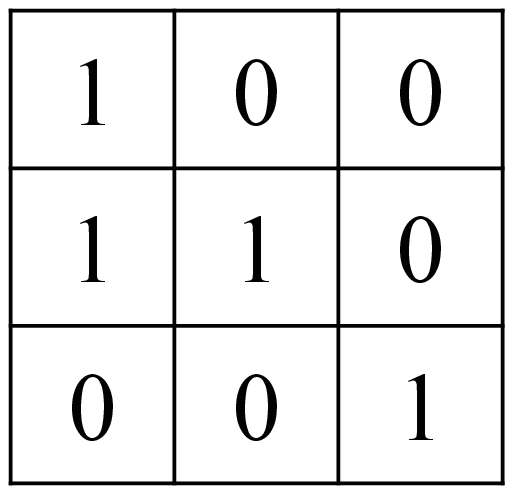}
		\caption{A $3 \times 3$ block $\bp$.}
		\label{fig:ex33s}
	\end{minipage}
	\begin{minipage}{0.75\hsize}
		\centering
			\includegraphics[clip, width=0.8\linewidth]{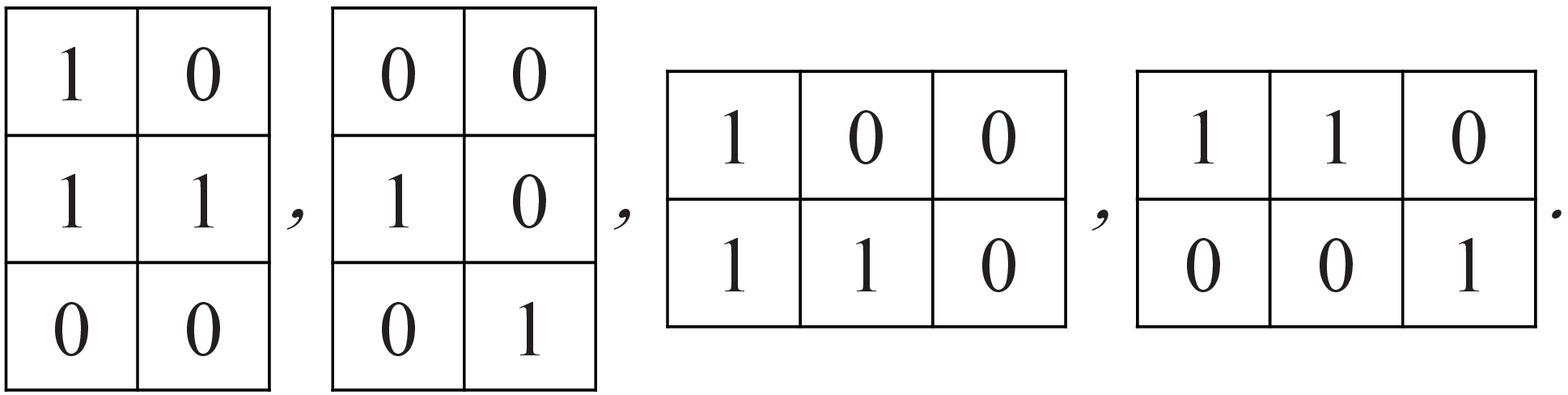}
		\caption{$\pi_c(\bp)$, $\sigma_c(\bp)$, $\pi_r(\bp)$, and $\sigma_r(\bp)$ of $\bp$ in Fig.~\ref{fig:ex33s}.}
		\label{fig:ps2}
	\end{minipage}
\end{figure}

\subsection{Subblock, Concatenation, and Dictionary}
For $\bp\!\in\!\cX^{[m, n]}$, a \emph{subblock} $\bp_{(i, j)}^{(i\!+\!k\!-\!1, j\!+\!l\!-\!1)}\!\!\!\in\!\!\cX^{[k, l]}$
is defined as 
\begin{equation*}
	\bp_{(i, j)}^{(i\!+\!k\!-\!1, j\!+\!l\!-\!1)}\!:=\!\left\{\begin{array}{ll}
				\lambda^{[0, l]}\ \,(k\!\le\!0\text{ and }l\!\ge\!0),&\nonumber\\
				\lambda^{[k, 0]}\ (k\!\ge\!0\text{ and }l\!\le\!0),&\nonumber\\
				\left(\begin{array}{ccc}
				p_{(i, j)} &\cdots &p_{(i, j\!+\!l\!-\!1)}\\
				\vdots&\ddots&\vdots\\
				p_{(i\!+\!k\!-\!1, j)}&\cdots &p_{(i\!+\!k\!-\!1, j\!+\!l\!-\!1)}
				\end{array}\right)&\\
				\ \ \ \ \ \ \ (k\!>\!0\text{ and }l\!>\!0)&\\

							\end{array} \label{eq:defxij}
				   \right.
\end{equation*}
where $1\!\leq\!i\!\leq\!m$, $1\!\leq\!j\!\leq\!n$, $k\!\leq\!m\!-\!i\!+\!1$, and $l\!\leq\!n\!-\!j\!+\!1$.
Hereinafter, without notice, we assume that the height and width of $\bp$ are respectively given by $m~(\ge 2)$ and $n~(\ge 2)$.
In particular, $(m-1)\times n$ subblocks $\bp_{(1, 1)}^{(m\!-\!1, n)}$ and $\bp_{(2, 1)}^{(m, n)}$ are
denoted by $\pi_r(\bp)$ and $\sigma_r(\bp)$, respectively.
Moreover, $m\times (n-1)$ subblocks $\bp_{(1, 1)}^{(m, n\!-\!1)}$ and $\bp_{(1, 2)}^{(m, n)}$
are denoted by $\pi_c(\bp)$ and $\sigma_c(\bp)$, respectively.
For example, for $\bp$ in Fig.~\ref{fig:ex33s}, 
Fig.~\ref{fig:ps2} shows $\pi_c(\bp)$, $\sigma_c(\bp)$, $\pi_r(\bp)$, and $\sigma_r(\bp)$
from the left-hand side. 

For $\bp$, the dictionary of $\bp$ is defined as the set of all the subblocks of $\bp$, that is,  
\begin{align*}
  \cD(\bp)\!:=\! 
    \{\bp_{(i, j)}^{(i\!+\!k\!-\!1, j\!+\!l\!-\!1)}\text{ s.t. }&1\!\leq\!i\!\leq\!m, 1\!\leq\!j\!\leq\!n, \nonumber\\
	&0\!\leq\!k\!\leq\!m\!-\!i\!+\!1, 0\!\leq\!l\!\leq\!n\!-\!j\!+\!1\}.
\end{align*}

Now we define a concatenation of blocks by column-wisely as follows:  
For two blocks $\bs, \bt\!\in\!\cX^{[*, *]}$ such that $|\bs|_r\!=\!|\bt|_r$, 
define $\bs\!:\!\bt \in \cX^{[|\bs|_r, |\bs|_c+|\bt|_c]}$ to be a block obtained by concatenating $\bt$ at the end of $\bs$ in columns.
Similarly, 
we define a concatenation of blocks by row-wisely as follows:  
for two blocks $\bu, \bv\!\in\!\cX^{[*, *]}$ such that $|\bu|_c\!=\!|\bv|_c$, 
define $\bu\!/\!\bv \in \cX^{[|\bu|_r+|\bv|_r, |\bu|_c]}$ to be a block obtained by concatenating $\bu$ at the end of $\bv$ in rows.

\subsection{Flat Torus, Primitive, and Frequencies of Subblocks}
For $\bp$, a \emph{flat torus} of $\bp$, denoted by $\bp^T$,
is constructed by concatenating the most left-hand side column (resp. the top row) 
to the most right-hand side column (resp. the bottom row) of $\bp$.
The flat torus can be treated as an infinite pattern such that $p_{(i, j)} = p^T_{(i+km, j+ln)}$ for non-negative integer $k, l$.

For $\bq \in \cX^{[m, n]}$ and $\bar{\bp}:=(\bp\!:\!\bp)/(\bp\!:\!\bp)$,  
if there exist positive integers $i~(1\leq i \leq m)$ and $j~(1\leq j \leq n)$ such that
$\bq = \bar{\bp}_{(i, j)}^{(i+m-1, j+n-1)}$ is satisfied, then the equivalence relation is denoted as
$\bq \simeq \bp$. Note that $\bar{\bp}$ is a $2m\times 2n$ subblock of $\bp^T$.
Let $[\bp]$ be the set of all the blocks $\bq$ such that $\bq \simeq \bp$,
\begin{align}
[\bp] := \{\bq \in \cX^{[m, n]}\text{ s.t. }\bq \in \mathcal{D}(\bar{\bp})\}.
\end{align}
If $|\,[\bp]\,| = mn$, $\bp$ is called \emph{primitive}. Hereinafter, without notice, we assume that $\bp$ is primitive. 
For example, $\bp$ shown in Fig.~\ref{fig:ex33s} is primitive.

For $\bp$ and $\bu \in \cX^{[k, l]}$~($0\!\leq\!k\!\leq\!m$ and $0\!\leq\!l\!\leq\!n)$,
\begin{align}
N(\bu\,|\,\bp)\!&:=\!|\,\{\br\text{ s.t. } \bu = \br_{(1, 1)}^{(k, l)}, \br \in [\bp]\}\,|\label{eq:N}
\end{align}
where $N(\lambda^{[k, l]}|\bp) = mn$~($k = 0$ or $l = 0$).
For convenience, we often adopt the notation $N(\bu)$ instead of $N(\bu|\bp)$.
For $\bp$, $0\!\leq\!k\!\leq\!m$, and $0\!\leq\!l\!\leq\!n$, 
\begin{align}
\sum_{{\bm u} \in \cX^{[k, l]}} N(\bu) = mn. \label{eq:CD1}
\end{align}
Moreover, 
for $\bv \in \cX^{[i, j]}~(0\!\leq\!i\!\leq\!m, 0\!\leq\!j\!<\!n)$ and $\bv' \in \cX^{[k, l]}~(0\!\leq\!k\!<\!m, 0\!\leq\!l\leq\!n)$,
\begin{align}
N(\bv)\!&=\!\!\!\!\!\!\!\sum_{\bc\in\cX^{[i, 1]}}\!\!\!\!N(\bc\!:\!\bv)\!=\!\!\!\!\!\!\!\sum_{\bc\in\cX^{[i, 1]}}\!\!\!\!N(\bv\!:\!\bc),\label{eq:CD2}\\
N(\bv')\!&=\!\!\!\!\!\!\!\sum_{\br\in\cX^{[1, l]}}\!\!\!\!N(\br/\bv')\!=\!\!\!\!\!\!\!\sum_{\br\in\cX^{[1, l]}}\!\!\!\!N(\bv'/\br).\label{eq:CD3}
\end{align}
\subsection{Classifications of Flat Tori and Core}
For $\bp$ and $k~(0 \leq k \leq m)$, and $l~(0 \leq l \leq n)$, 
\begin{align}
\mathcal{T}(\bp, k, l) &:= \{\bq \in \cX^{[m, n]}\text{ s.t. }N(\bw|\bq) = N(\bw|\bp),\nonumber\\
					   &\ \ \ \ \ \  {}^\forall \bw \in \cX^{[k, l]}, \bq\text{ is primitive.}\}
\end{align}
For example, $[\bp] = \mathcal{T}(\bp, m, n)$.
For $0\!\leq\!k\!<\!n$ and fixed $0\!\leq\!l\!\leq\!n$, 
$\mathcal{T}(\bp, k, l)$ is monotone decreasing with $k$, that is
$\mathcal{T}(\bp, k\!+\!1, l)\!\subset\!\mathcal{T}(\bp, k, l)$.
Similarly, 
for fixed $0\!\leq\!k'\!\leq\!n$ and $0\!\leq\!l'\!<\!n$, 
$\mathcal{T}(\bp, k', l'\!+\!1)\!\subset\!\mathcal{T}(\bp, k', l')$.
Next, we define $\mathcal{B}(\bp)$, 
\begin{align}
\mathcal{B}(\bp)\!:=\!\{\bb\!\in\!\cX^{[k, l]}\text{ s.t. }&\sigma_r(\pi_r(\bb))\!\in\!\mathcal{D}(\bar{\bp}), \sigma_c(\pi_c(\bb))\!\in\!\mathcal{D}(\bar{\bp}),\nonumber\\
&1\!\leq\!k\!\leq\!m, 1\!\leq\!l\!\leq\!n\}\cup\{\lambda^{[0, 0]}\}.
\end{align}

We assume that elements of $\mathcal{B}(\bp)$ are ordered
in ascending order with its height (if heights of the elements are equal, then the elements ordered with its width;
if widths of the elements are equal, then the elements are ordered in lexicographical order column-wisely)
where $\bb_i$ is the $i$th element of $\mathcal{B}(\bp)~(1\!\leq\!i\!\leq\!|\mathcal{B}(\bp)|)$.
For $i~(1\!\leq\!i\!\leq\!|\mathcal{B}(\bp)|)$, 
\begin{align}
\mathcal{T}(\mathcal{B}(\bp), \bp, i) := \{&\bq \in \cX^{[m, n]}\text{ s.t. }N(\bb_j\,|\,\bq)\!=\!N(\bb_j\,|\,\bp),\nonumber\\
&1 \leq {}^\forall j \leq i, \bq\text{ is primitive.}\}
\end{align}
For example, $[\bp]\!=\!\mathcal{T}(\mathcal{B}(\bp), \bp, |\mathcal{B}(\bp)|)$.
For $1\!\leq\!i\!<\!|\mathcal{B}(\bp)|$, 
$\mathcal{T}(\mathcal{B}(\bp), \bp, i)$ is monotone decreasing with $i$, that is
$\mathcal{T}(\mathcal{B}(\bp), \bp, i\!+\!1)\!\subset\!\mathcal{T}(\mathcal{B}(\bp), \bp, i)$.

A $\bu\!\in\!\mathcal{B}(\bp)$ such that  
%
$\ba:\bu, \bb:\bu, \bu:\bc, \bu:\bd\!\in\!\mathcal{D}(\bar{\bp})$
%
where $\ba, \bb(\neq\!\ba), \bc, \bd(\neq\!\bc)\!\in\!\cX^{[|\bu|_r, 1]}$ is called \emph{c-core}.
A $\bv \in \mathcal{B}(\bp)$ such that  
$\be/\bv, \bf/\bv, \bv/\bg, \bv/\bh \in \mathcal{D}(\bar{\bp})$
%
where $\be, \bf(\neq\!\be), \bg, \bh(\neq\!\bg)\!\in\!\cX^{[1, |\bv|_c]}$ is called \emph{r-core}.

\section{Review of Conventional CSE}~\label{sec:Conv}
The conventional CSE is a lossless compression algorithm for a 1D source.
For $\bp$, we can regard $\bp$ as a 1D source $\bx \in \hat{\cX}^{[1, n]}$ over an extended alphabet $\hat{\cX} (=\cX^{[m, 1]})$,
so that the CSE can encode $\bp$ as a 1D source $\bx$.
For $\bx$,
the CSE outputs a following triplet
\begin{align}
(E(n), e(\bb_2, \bb_3, \dots, \bb_{|\mathcal{B}(\bx)|}), \epsilon(\text{rank($\bx$)})). \label{eq:Ccode}
\end{align}
In (\ref{eq:Ccode}), $E(n)$ represents an encoded $n$ by means of Elias integer code~\cite{Eli75}.
And rank($\bx$) represents an index for identifying $\bx$ in $[\bx]$
such as the rank of $\bx$ in $[\bx]$ with lexicographical order.
Then, $\epsilon$(\text{rank($\bx$)}) represents an encoded rank($\bx$) by $\lceil \log_2 n \rceil$ bits, and
$e(\bb_2, \bb_3, \dots, \bb_{|\mathcal{B}(\bx)|})$
represents a sequence of $N(\bb_i)~(2\leq i \leq |\mathcal{B}(\bx)|)$ which
are encoded by an entropy coding where $N(\bb_i)$ represents $N(\bb_i|\bx)$ in this subsection.
In encoding, for $\bb_i \in \mathcal{B}(\bx)$, 
$i$ is selected from 2 to $|\mathcal{B}(\bx)|$ since $N(\bb_1) = N(\lambda^{[0, 0]}) = n$
and $n$ is encoded as $E(n)$. 
For $2 \leq i \leq |\mathcal{B}(\bx)|$,  
\begin{description}
\item[(C-i)] in case of $|\bb_i|_c\!=\!1$: Encode $N(\bb_i)$ if $\bb_i \neq \bb_{|\hat{\cX}|\!+\!1}$,
\item[(C-ii)]in\,case of $|\bb_i|_c\!\geq\!2$: Encode $N(\bb_i)$ if (\ref{eq:CIA}) holds and 
$\ba, \bc\!\in\!\hat{\cX}\!\backslash\{\bb_{|\hat{\cX}|\!+\!1}\}$
where $\bb_i = \ba\!:\!\bw\!:\!\bc$ such that $\bw = \sigma_c(\pi_c(\bb_i))$
\end{description}
where $\bb_{|\hat{\cX}|\!+\!1}$ is the element of $\hat{\cX}$ having the largest index in $\mathcal{B}(\bx)$
and note that (\ref{eq:CIA}) was first shown in \cite{IA15}.
Note that in (C-i), $N(\bb_i)$ is encoded even if $N(\bb_i) = 0$.

In (C-i), $N(\bb_{|\hat{\cX}|\!+\!1})$ can be calculated
by using (\ref{eq:CD1}) and already encoded $\bb_j (j < |\hat{\cX}|\!+\!1)$. Similarly, in (C-ii), $N(\bb_i)$ such that $\ba = \bb_{|\hat{\cX}|\!+\!1}$ or 
$\bc = \bb_{|\hat{\cX}|\!+\!1}$ can be calculated by using (\ref{eq:CD2}) and $\bb_k~(k < i)$. Therefore, they are not encoded.
\begin{align}
\min(N(\ba\!:\!\bw),\, &N(\bw\!:\!\bc), N(\bw)\!-\!N(\ba\!:\!\bw),\nonumber\\
&N(\bw)\!-\!N(\bw\!:\!\bc)) \geq 1. \label{eq:CIA}
\end{align}
As for $\bb_i(=\ba\!:\!\bw\!:\!\bc)$ in (C-ii), satisfying (\ref{eq:CIA}) is the same that $\bw$ is a c-core.
Moreover, since $\ba, \bw, \bc\!\in\!\mathcal{D}(\bar{\bx})$ and (\ref{eq:CD1}) holds, number of candidates of $\bb_i$ for encoding in (C-ii) is polynomial order with $n$.
The details are described in the bottom of this section.
In (C-i), $N(\bb_i)$ satisfies the following inequality
\begin{align}
0 \leq N(\bb_i) \leq n-1. \label{iq:Ci}
\end{align}
In (C-ii), $N(\bb_i)$ satisfies the following inequality~\cite{OM141}
\begin{align}
&\max\{0, N(\ba\!:\!\bw)-\!\!\!\!\!\!\!\sum_{\bd \in \hat{\cX}\backslash \{\bc\}}\!\!\!\!\!\!N(\bw\!:\!\bd), N(\bw\!:\!\bc)-\!\!\!\!\!\!\!\sum_{\bb \in \hat{\cX}\backslash \{\ba\}}\!\!\!\!\!\!N(\bb\!:\!\bw)\} \nonumber\\
&\leq N(\ba\!:\!\bw\!:\!\bc)\!\leq\!\min\{N(\ba\!:\!\bw), N(\bw\!:\!\bc)\}. \label{iq:Cii}
\end{align}
The left-hand side term in (\ref{eq:CIA}) is given by the difference between the 3rd term
and the 1st term in (\ref{iq:Cii}). 
Therefore, if (\ref{eq:CIA}) does not hold, then the 1st and the 3rd terms are equal.
In other words, $N(\bb_i) = \min\{N(\ba\!:\!\bw), N(\bw\!:\!\bc)\}$ holds, so that $N(\bb_i)$ can be calculated.
Hence, $N(\bb_i)$ is not encoded if (\ref{eq:CIA}) does not hold.

Let $I(\ba\!:\!\bw\!:\!\bc)$ be $\min(N(\ba\!:\!\bw), N(\bw\!:\!\bc), N(\bw)\!-\!N(\ba\!:\!\bw),$
$N(\bw)\!-\!N(\bw\!:\!\bc)) + 1$
where $\min(\cdot)$ is the left-hand term of (\ref{eq:CIA}).
For encoding $N(\bb_i)$ by an entropy coding,
a probability is assigned to $N(\bb_i)$ as follows~\cite{Yo11}.
\begin{align}
\frac{1}{n}\ \ &(|\bb_i|_c = 1), \label{eq:PC0}\\
\frac{1}{I(\bb_i)} \ \ &(2 \leq |\bb_i| _c\leq \left\lfloor \log_2\log_2 n \right\rfloor),\label{eq:PC1}\\
\frac{|\mathcal{T}(\mathcal{B}(\bx), \bx, i)|}{|\mathcal{T}(\mathcal{B}(\bx), \bx, i\!-\!1)|}\ \ &(|\bb_i|_c \geq \left\lfloor \log_2\log_2 n \right\rfloor\!+\!1). \label{eq:PC2}
\end{align}
The assigned probabilities are encoded by an entropy coding such as an arithmetic coding~\cite{MT95}.

For encoding 2D source $\bp$ by the conventional CSE,
there is a problem with respect to computational time.
In (C-i), number of encoded $N(\bb_i)~(2\leq i \leq |\hat{\cX}|)$ is exponential with respect to $m$
since $|\hat{\cX}|$ is $|\cX|^m$. In practical, $m$ is greater than 1000 for an image $\bp \in \cX^{[m, n]}$,
so that the number is greater than $2^{1000}$ even if $|\cX| = 2$.
Note that in (C-ii), number of encoded $N(\bb_i)$ is not exponential with respect to $m$ and $n$.
The reason is as follows.
Since $\bw$ is a c-core, from (\ref{eq:CD1}) and (\ref{eq:CD2}), the total number of c-cores is polynomial order with respect to $m$ and $n$.
Moreover, since $N(\ba\bw)\ge 1$ and $N(\bw\bc)\ge 1$ in (\ref{eq:CIA}), $\ba, \bc \in \cD(\bar{\bx})\cap \hat{\cX}$ also hold.
From (\ref{eq:CD1}) and (\ref{eq:CD2}),
$|\cD(\bar{\bx}) \cap \hat{\cX}|$ never exceeds $mn$.
Hence, the total number of candidates $\bb_i (=\ba\!:\!\bw\!:\!\bc)$ for encoding in (C-ii) is polynomial order with respect to $m$ and $n$.
In other words, the set of all the candidates can be utilized instead of $\mathcal{B}(\bx)$ in (C-ii) in practice.
Note that $\mathcal{B}(\bx)$ is utilized for simplifying the explanation in this paper.
As for compression ratio, only a relation on column is utilized as shown in (\ref{eq:CIA})
and a relation on row is not utilized.
\section{Proposed Algorithm}~\label{Proposed}
For $\bp$, we assume that $m \le n$. 
Let $K$ and $L$ be $\lfloor \sqrt{\log_{|\cX|}\log_{|\cX|} m} \rfloor$ and $\lfloor \sqrt{\log_{|\cX|}\log_{|\cX|} n} \rfloor$, respectively.

We divide $\mathcal{B}(\bp)$ into four disjoint parts with respect to size of its elements.
\begin{align*}
\mathcal{B}_0(\bp)\!&:=\!\{\bb\!\in\!\mathcal{B}(\bp)\text{ s.t. }\bb = \lambda^{[0, 0]}\},\\
\mathcal{B}_1(\bp)\!&:=\!\{\bb\!\in\!\mathcal{B}(\bp)\text{ s.t. }\bb \in \cX\},\\
\mathcal{B}_2(\bp)\!&:=\!\{\bb\!\in\!\mathcal{B}(\bp)\text{ s.t. }1\!\le\!|\bb|_r\!\le K,\!1\le\!|\bb|_c\!\le\!L, \bb\!\notin\!\cX\},\\ 
\mathcal{B}_3(\bp)\!&:=\!\{\bb\!\in\!\mathcal{B}(\bp)\text{ s.t. }K < |\bb|_r\text{ or }L < |\bb|_c\}.
\end{align*}
Elements of $\mathcal{B}_i(\bp)~(i = 0, 1, 2, 3)$ are ordered
in ascending order with its height (if heights of the elements are equal, then the elements ordered with its width;
if widths of the elements are equal, then the elements are ordered in lexicographical column-wisely.)
Then, elements of $\mathcal{B}(\bp)$ are reordered with $(\mathcal{B}_0(\bp), \mathcal{B}_1(\bp), \mathcal{B}_2(\bp), \mathcal{B}_3(\bp))$.
For $2 \leq i \leq |\mathcal{B}(\bp)|$,  
\begin{description}
\item[(P-i)] in case of $\bb_i\!\in\!\mathcal{B}_1(\bp)$: Encode $N(\bb_i)$ if $\bb_i \neq J\!-\!1$,
\item[(P-ii)]in\,case of $\bb_i\!\in\!\mathcal{B}_2(\bp)\!\cup\!\mathcal{B}_3(\bp)$:
\item[\ \ \ \ \ 1)] if $|\bb_i|_c\!=\!1$: Encode $N(\bb_i)$ if (\ref{eq:CIA}) holds and $\ba, \bc\!\in\!\cX\backslash\{J\!-\!1\}$
where $\bb_i\!=\!\ba\!\!:\!\!\bw\!\!:\!\!\bc$ such that $\bw\!=\!\sigma_c(\pi_c(\bb_i))$,
\item[\ \ \ \ \ 2)] if $|\bb_i|_r\!=\!1$: Encode $N(\bb_i)$ if (\ref{eq:CIA2}) holds and $\be, \bg\!\in\!\cX\backslash\{J\!-\!1\}$
where $\bb_i\!=\!\be\!/\!\bv\!/\!\bg$ such that $\bv\!=\!\sigma_r(\pi_r(\bb_i))$,
\item[\ \ \ \ \ 3)] if $|\bb_i|_c \ge 2$ and $|\bb_i|_r \ge 2$: Encode $N(\bb_i)$ if both (\ref{eq:CIA}) and (\ref{eq:CIA2}) hold
where $\ba, \bc\!\in\!\cX^{[|\bb_i|_r, 1]}\backslash\{\bx(|\bb_i|_r, 1)\}$ and $\be, \bg\!\in\!\cX^{[1, |\bb_i|_c]}\backslash\{\bx(1, |\bb_i|_c)\}$,
\end{description}
where $\bx(k, 1)$ and $\bx(1, l)$ are the element of $\cX^{[k, 1]}$ and $\cX^{[1, l]}$ having the largest index in $\mathcal{B}(\bp)$, respectively.
\begin{align}
\min(N(\be\!/\!\bv),\, &N(\bv\!/\!\bg), N(\bv)\!-\!N(\be\!/\!\bv),\nonumber\\
&N(\bv)\!-\!N(\bv\!/\!\bg)) \geq 1. \label{eq:CIA2}
\end{align}
As for $\bb_i(=\be\!/\!\bv\!/\!\bg)$ in 2) and 3), satisfying (\ref{eq:CIA2}) is the same that $\bv$ is a r-core.
As shown in the discussions in Sec.~\ref{sec:Conv}, 
number of candidates of $\bb_i$ for encoding in (P-ii) is polynomial order with $m$ and $n$.
The details are described in the bottom of this section.

The conventional CSE utilizes only condition (\ref{eq:CIA}) with respect to column,
while the proposed algorithm utilizes conditions (\ref{eq:CIA}) and (\ref{eq:CIA2}) with respect to column and row,
respectively, for encoding $\bp$. In 1) and 2), $\bb_i$ is one row and one column, so that (\ref{eq:CIA}) and ({\ref{eq:CIA2}) is only utilized, respectively. 
In (P-i), $N(\bb_i)$ satisfies $0\!\le\!N(\bb_i)\le\!mn\!-\!1$.
In (P-ii), $N(\bb_i)$ such that $|\bb_i|_c\ge 2$ satisfies a modified (\ref{iq:Cii}) which is obtained by replacing $\hat{\cX}$ by $\cX^{[|\ba|_r, 1]}$, and 
$N(\bb_i)$ such that $|\bb_i|_r \ge 2$ satisfies the following inequality
\begin{align}
&\max\{0, N(\be\!/\!\bv)-\!\!\!\!\!\!\!\sum_{\bh \in \cX^{[1, |\be|_c]}\backslash \{\bg\}}\!\!\!\!\!\!N(\bv\!/\!\bh), N(\bv\!/\!\bg)-\!\!\!\!\!\!\!\sum_{\bf \in ^{[1, |\be|_c]}\backslash \{\be\}}\!\!\!\!\!\!N(\bf\!/\!\bv)\} \nonumber\\
&\leq N(\be\!/\!\bv\!/\!\bg)\!\leq\!\min\{N(\be\!/\!\bv), N(\bv\!/\!\bg)\}. \label{iq:Cii2}
\end{align}
As described on (\ref{eq:CIA}), similarly,
the left-hand side term in (\ref{eq:CIA2}) is given by the difference between the 3rd term
and the 1st term in (\ref{iq:Cii2}). 
Therefore, if (\ref{eq:CIA2}) does not hold, then the 1st and the 3rd terms are equal.
In other words, $N(\bb_i) = \min\{N(\be\!/\!\bv), N(\bv\!/\!\bg)\}$ holds, so that $N(\bb_i)$ can be calculated.
Hence, $N(\bb_i)$ is not encoded if (\ref{eq:CIA2}) does not hold.
Therefore, in 3), $N(\bb_i)$ is encoded if both (\ref{eq:CIA}) and (\ref{eq:CIA2}) hold.

Let $I'(\be\!/\!\bv\!/\!\bg)$ be $\min(N(\be\!/\!\bv), N(\bv\!/\!\bg), N(\bv)\!-\!N(\be\!/\!\bv),$
$N(\bv)\!-\!N(\bv\!/\!\bg)) + 1$
where $\min(\cdot)$ is the left-hand term of (\ref{eq:CIA2}).
For encoding $N(\bb_i)$ by an entropy coding,
a probability is assigned to $N(\bb_i)$ as follows.
\begin{align}
\frac{1}{mn}\ \ &(\bb_i \in \mathcal{B}_1(\bp)), \label{eq:PP0}\\
\max\left(\frac{1}{I(\bb_i)}, \frac{1}{I'(\bb_i)}\right) \ \ &(\bb_i \in \mathcal{B}_2(\bp)),\label{eq:PP1}\\
\frac{|\mathcal{T}(\mathcal{B}(\bp), \bp, i)|}{|\mathcal{T}(\mathcal{B}(\bp), \bp, i\!-\!1)|}\ \ &(\bb_i \in\mathcal{B}_3(\bp)). \label{eq:PP2}
\end{align}
The assigned probabilities are encoded by an entropy coding such as an arithmetic coding.
For $\bp$,
the proposed algorithm outputs a following quartet
\begin{align}
(E(m), E(n), e(\bb_2, \bb_3, \dots, \bb_{|\mathcal{B}(\bp)|}), \epsilon(\text{rank($\bp$)})). \label{eq:Pcode}
\end{align}
In (\ref{eq:Pcode}), $E(m)$ and $E(n)$ represent encoded $m$ and $n$ by means of Elias integer code, respectively.
And rank($\bp$) represents an index for identifying $\bp$ in $[\bp]$
such as the rank of $\bp$ in $[\bp]$ with lexicographical order column-wisely.
Then, $\epsilon$(\text{rank($\bp$)}) represents an encoded rank($\bp$) by $\lceil \log_2 mn \rceil$ bits, and
$e(\bb_2, \bb_3, \dots, \bb_{|\mathcal{B}(\bp)|})$
represents a sequence of $N(\bb_i)~(2\leq i \leq |\mathcal{B}(\bp)|)$ which
are encoded by an entropy coding as described in Sec~\ref{sec:Conv}.

In the proposed algorithm, 
in (P-i), number of encoded $N(\bb_i)$ is $|\cX|\!-\!1$, that is a constant, 
while that in (C-i) is exponential with respect to $m$, that is $|\cX|^m\!-\!1$.
As for (P-ii), number of candidates $N(\bb_i)$ for encoding is polynomial order with respect to $m$ and $n$.
The reason is as follows.
As for 1), it is the same as (C-ii).
As for 2) and 3), since $\bv$ is a r-core, from the discussions on a c-core described in Sec.~\ref{sec:Conv}, 
the total number of candidates $N(\bb_i)$ for encoding is polynomial order with $m$ and $n$.
In other words, the set of all the candidates can be utilized instead of $\mathcal{B}(\bp)$ in (P-ii) in practice.
Similarly, note that $\mathcal{B}(\bp)$ is utilized for simplifying the explanation in this paper.
Hence, 
for a 2D source $\bp$, 
the total number of output blocks of the proposed algorithm 
is polynomial with respect to $m$ and $n$ while
that of the conventional CSE 
is exponential with respect to $m$.
\section{Evaluation of the Proposed Algorithm}
A general source $\mathbf{X}$ is defined as 
\[\mathbf{X}\!:=\!\{X^{[m, n]}\!=\!(X_{(1, 1)}^{<m, n>}, X_{(1, 2)}^{<m, n>}, \dots, X_{(m, n)}^{<m, n>})\}_{m\!=\!1, n\!=\!1}^{\infty, \infty}\]
where a random variable $X^{[m, n]}$ takes a value in the $m \times n$ Cartesian product $\cX^{[m, n]}$ of $\cX$~\cite{Han02}.
The probability distribution of a random variable $X^{[m, n]}$ is denoted by $P_{X^{[m, n]}}$.
For $\mathbf{X}$, the sup-entropy rate of $\mathbf{X}$ is defined as
\begin{align}
\hat{H}(\mathbf{X}):=\limsup_{m\rightarrow \infty, n \rightarrow \infty}\frac{1}{mn}H(X^{[m, n]}). \label{eq:Hx}
\end{align}

For $\bp$, let $\ell(\bp)$ be a codeword length of the proposed algorithm. 
Let $\ell_0(\bp)$ be the total codeword length of $E(m)$, $E(n)$, and $\epsilon$(\text{rank($\bp$)}) in (\ref{eq:Pcode}).
The codeword length of $e(\bb_2, \bb_3, \dots, \bb_{|\mathcal{B}(\bp)|})$
consists of three parts $\ell_{1}(\bp)$, $\ell_{2}(\bp)$, and $\ell_{3}(\bp)$ where
$\ell_{1}(\bp)$, $\ell_{2}(\bp)$, and $\ell_{3}(\bp)$ are  the total codeword length of ${N}(\bb_i)$
for $\bb_i \in \mathcal{B}_1(\bp)$, $\bb_i \in \mathcal{B}_2(\bp)$, and $\bb_i \in \mathcal{B}_3(\bp)$, respectively.
Here, $\ell(\bp) = \ell_0(\bp) + \ell_{1}(\bp)+\ell_{2}(\bp)+\ell_{3}(\bp)$.

Theorem~\ref{th:CADF} is one of our main results. To prove Theorem~\ref{th:CADF}, we show three lemmas.
Lemma~\ref{le:cmb} is a 2D version of Lemma~3~\cite{Yo11}, and the proofs of Lemmas~\ref{le:cmb} and~\ref{le:eqT} are omitted
in this paper.
\begin{theorem}\label{th:CADF}
For a general source $\mathbf{X}$,
\[\limsup_{m, n\rightarrow \infty}E\left[\frac{\ell(X^{[m,n]})}{mn}\right]=\hat{H}(\mathbf{X}).\]
\end{theorem}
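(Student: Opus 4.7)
The plan is to split the codeword length as $\ell(\bp) = \ell_0(\bp) + \ell_1(\bp) + \ell_2(\bp) + \ell_3(\bp)$, to show that the first three summands contribute $o(mn)$ deterministically in $\bp$, and to show that $\ell_3(\bp)/(mn)$ is bounded in expectation by $H(X^{[m,n]})/(mn) + o(1)$. The converse direction is free: Shannon's source coding inequality $E[\ell(X^{[m,n]})] \ge H(X^{[m,n]})$, valid for any uniquely decodable code, gives $\limsup E[\ell(X^{[m,n]})/(mn)] \ge \limsup H(X^{[m,n]})/(mn) = \hat{H}(\mathbf{X})$, so only the matching upper bound needs work.

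For the three easy pieces, I would argue as follows. The header $\ell_0(\bp)$ consists of two Elias codewords for $m$ and $n$ together with $\lceil \log_2 mn \rceil$ bits for the rank, and is therefore $O(\log mn)$. The term $\ell_1(\bp)$ encodes at most $|\cX|-1$ counts, each assigned the probability $1/(mn)$ from (\ref{eq:PP0}), yielding the deterministic bound $(|\cX|-1)\lceil \log_2 mn\rceil + O(1)$. For $\ell_2(\bp)$, the blocks in $\mathcal{B}_2(\bp)$ have height at most $K$ and width at most $L$, so their number is bounded by $KL\,|\cX|^{KL}$; using $KL \le \log_{|\cX|}\log_{|\cX|} \max(m,n)$ this is $O((\log mn)(\log\log mn))$, and each count costs at most $\log_2(mn+1)+O(1)$ bits because the probability in (\ref{eq:PP1}) is at least $1/(mn+1)$. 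Hence $\ell_0+\ell_1+\ell_2 = O((\log mn)^3) = o(mn)$ uniformly in $\bp$. This is the very reason for choosing $K,L$ doubly logarithmic.

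For the main term, the probability assignment (\ref{eq:PP2}) telescopes. Writing $i_0$ for the index of the last element of $\mathcal{B}_2(\bp)$ and using $|[\bp]| = mn$ (primitivity),
\[
\prod_{\bb_i\in\mathcal{B}_3(\bp)} \frac{|\mathcal{T}(\mathcal{B}(\bp),\bp,i)|}{|\mathcal{T}(\mathcal{B}(\bp),\bp,i-1)|} \;=\; \frac{mn}{|\mathcal{T}(\mathcal{B}(\bp),\bp,i_0)|},
\]
so the arithmetic coder satisfies $\ell_3(\bp) \le \log_2|\mathcal{T}(\mathcal{B}(\bp),\bp,i_0)| - \log_2(mn) + O(1)$. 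Invoking Lemma~\ref{le:cmb} (the 2D analogue of Yokoo's Lemma~3) together with Lemma~\ref{le:eqT} bounds $|\mathcal{T}(\mathcal{B}(\bp),\bp,i_0)|$ by the cardinality of a 2D type class whose logarithm is an empirical $K\times L$-block entropy of $\bp$; averaging over $\bp \sim P_{X^{[m,n]}}$ and applying the standard subadditivity/stationarization trick upper-bounds $E[\log_2|\mathcal{T}(\mathcal{B}(X^{[m,n]}),X^{[m,n]},i_0)|]$ by $H(X^{[m,n]}) + o(mn)$.

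Combining, $E[\ell(X^{[m,n]})/(mn)] \le H(X^{[m,n]})/(mn) + o(1)$, and taking $\limsup$ gives the desired direct inequality, matching the converse. The principal obstacle is the 2D generalization of Yokoo's combinatorial lemma, i.e.\ controlling how much smaller than $|\cX|^{mn}$ the equivalence class becomes once all $K\times L$-block frequencies are fixed; this is handled by Lemma~\ref{le:cmb}, but the quantitative $o(mn)$ slack must be matched against the $\ell_2$ overhead, which is why $K$ and $L$ must both tend to infinity yet stay below $\sqrt{\log_{|\cX|}\log_{|\cX|} \min(m,n)}$. A secondary, more technical obstacle is verifying that the empirical $K\times L$-block entropy converges in expectation to $\hat{H}(\mathbf{X})$ for \emph{general} (non-stationary, non-ergodic) sources, which requires phrasing the bound directly in terms of $H(X^{[m,n]})$ rather than $H(X^{[K,L]})$ before passing to the $\limsup$.
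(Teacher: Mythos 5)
Your overall architecture is the same as the paper's: the split $\ell=\ell_0+\ell_1+\ell_2+\ell_3$, the polylogarithmic bounds on $\ell_0,\ell_1,\ell_2$ (this is indeed why $K,L$ are chosen doubly logarithmic), the telescoping of the probabilities (\ref{eq:PP2}) combined with Lemma~\ref{le:eqT} to get $\ell_3(\bp)=\log_2|\mathcal{T}(\bp,K,L)|-\log_2 mn$, the application of Lemma~\ref{le:cmb} to bound this by an empirical $K\times L$-block entropy, and the Kraft/Shannon argument for the converse direction. Up to that point the proposal matches the paper's proof essentially step for step.

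The gap is in your final averaging step. You assert that ``the standard subadditivity/stationarization trick'' upper-bounds $E\bigl[\log_2|\mathcal{T}(\mathcal{B}(X^{[m,n]}),X^{[m,n]},i_0)|\bigr]$ by $H(X^{[m,n]})+o(mn)$, i.e.\ that the expected empirical $K\times L$-block entropy (scaled by $mn/KL$) is controlled by the joint entropy $H(X^{[m,n]})$. No such bound holds for general sources, and it fails badly in exactly the direction you need: if $X^{[m,n]}$ is deterministic and equal to a fixed complicated pattern, then $H(X^{[m,n]})=0$ while the empirical $K\times L$-block entropy of that pattern stays bounded away from zero, so $\log_2|\mathcal{T}(\bp,K,L)|$ is of order $mn$. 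Per-symbol block entropies decrease as the block grows, so the empirical $K\times L$ entropy can only be compared with $H(X^{[K,L]})/KL$, never with $H(X^{[m,n]})/mn$; your stated preference for ``phrasing the bound directly in terms of $H(X^{[m,n]})$ rather than $H(X^{[K,L]})$'' is therefore the one direction that cannot work. The paper instead does two things you omit: after Lemma~\ref{le:cmb} it applies Jensen's inequality (convexity of $t\mapsto t\log_2 t$) to replace $E\left[\frac{N(\bw|X^{[m,n]})}{mn}\log_2\frac{N(\bw|X^{[m,n]})}{mn}\right]$ by $E\left[\frac{N(\bw|X^{[m,n]})}{mn}\right]\log_2 E\left[\frac{N(\bw|X^{[m,n]})}{mn}\right]$, and then it invokes Lemma~\ref{le:Exp}, which identifies $E\bigl[N(\bw|X^{[m,n]})/mn\bigr]$ with the block probability $P_{X^{[m,n]}}(\bw)$ (the double-logarithmic growth of $K,L$ makes the toroidal boundary correction vanish) and shows that the limsup of the resulting normalized block entropy equals $\hat H(\mathbf{X})$. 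Without stating and proving such a Lemma~\ref{le:Exp}-type statement, your direct part does not close, and the substitute claim you make in its place is false as stated.
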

\begin{lemma}\label{le:cmb}
For $\bp$, $1\!\leq\!k\!\leq\!m$, and $1\!\leq\!l\!\leq\!n$
\[\log_2|\mathcal{T}(\bp, k, l)| \leq -\frac{mn}{kl}\sum_{\bw \in \cX^{[k, l]}} \frac{N(\bw\,|\,\bp)}{mn}\log \frac{N(\bw\,|\,\bp)}{mn}. \]
\end{lemma}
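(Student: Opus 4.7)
The plan is to reduce the lemma to Shearer's entropy inequality on the flat torus of $\bp$. Let $Q(\bw):=N(\bw\mid\bp)/(mn)$ for $\bw\in\cX^{[k,l]}$; by (\ref{eq:CD1}) this is a probability distribution on $\cX^{[k,l]}$, and the right-hand side of the lemma equals $(mn/(kl))\,H(Q)$, where $H$ denotes the binary Shannon entropy. It therefore suffices to prove
\[
\log_2|\mathcal{T}(\bp,k,l)|\;\le\;\frac{mn}{kl}\,H(Q).
\]

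I would first exploit a torus-shift symmetry. Since each count $N(\bw\mid\cdot)$ and the primitivity condition are invariant under the action of $\Z_m\times\Z_n$ by torus shifts, the set $\mathcal{T}(\bp,k,l)$ is closed under those shifts, and primitivity makes that action free (every orbit has size exactly $mn$). Draw $\bq$ uniformly at random from $\mathcal{T}(\bp,k,l)$, so $H(\bq)=\log_2|\mathcal{T}(\bp,k,l)|$ and the law of $\bq$ is torus-invariant. For $(i,j)\in\{1,\ldots,m\}\times\{1,\ldots,n\}$ let $B_{(i,j)}(\bq)$ denote the $k\times l$ window of $\bq^T$ whose top-left cell is $(i,j)$. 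Torus-invariance forces the law of $B_{(i,j)}(\bq)$ to be the same at every position, and averaging over the $mn$ positions---using that $N(\bw\mid\bq)=N(\bw\mid\bp)$ for every $\bw$---identifies this common law with $Q$, so $H(B_{(i,j)}(\bq))=H(Q)$ for every $(i,j)$.

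Finally I would apply Shearer's entropy inequality to the $mn$ cells of $\bq$ with the covering family being the $mn$ torus windows of shape $k\times l$. Every cell is contained in exactly $kl$ of these windows, so Shearer's lemma gives
\[
kl\cdot H(\bq)\;\le\;\sum_{(i,j)}H\bigl(B_{(i,j)}(\bq)\bigr)\;=\;mn\cdot H(Q),
\]
and substituting $H(\bq)=\log_2|\mathcal{T}(\bp,k,l)|$ yields the claim. The one delicate point is the exact torus-invariance of the uniform law on $\mathcal{T}(\bp,k,l)$; this relies on the definitional primitivity of every element of $\mathcal{T}(\bp,k,l)$ (so every shift orbit has size exactly $mn$ and the uniform distribution is not perturbed by shorter periods). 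If one prefers to avoid Shearer's lemma, the special case $k\mid m$ and $l\mid n$ admits a simpler argument: the same marginal identification combined with plain subadditivity $H(\bq)\le\sum_s H(B_s)$ over a single disjoint $(k,l)$-tiling of the torus with $mn/(kl)$ tiles gives the bound at once, but the Shearer route is what lets the argument proceed for arbitrary $k$ and $l$.
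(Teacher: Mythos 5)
Your argument is correct, but be aware that the paper contains no proof of this lemma to compare against: the authors state it as the two-dimensional analogue of Lemma~3 of~\cite{Yo11} and explicitly omit its proof. Judged on its own, your Shearer-based route is sound and self-contained. The key facts you need all hold: $\mathcal{T}(\bp,k,l)$ is closed under the $\Z_m\times\Z_n$ torus-shift action (the counts in (\ref{eq:N}) are computed on the torus and primitivity is shift-invariant), so the uniform law on it is shift-invariant; every element $\bq$ of $\mathcal{T}(\bp,k,l)$ is primitive by definition, so $|[\bq]|=mn$ and $N(\bw\mid\bq)$ really is the number of torus positions whose $k\times l$ window equals $\bw$ --- this is where primitivity is actually needed, whereas the invariance of the uniform law requires only closure under the action, not freeness of it; averaging over the $mn$ positions then identifies every window marginal with $Q(\bw)=N(\bw\mid\bp)/(mn)$, and Shearer's inequality with the $mn$ torus windows, each cell covered exactly $kl$ times, gives $kl\,\log_2|\mathcal{T}(\bp,k,l)|\le mn\,H(Q)$, which is the claimed bound (the nonemptiness of $\mathcal{T}(\bp,k,l)$ is immediate since it contains $[\bp]$). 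Relative to the 1D argument the paper points to, which works along a circular string, your proof has the merit of handling the genuinely 2D situation for arbitrary $k$ and $l$ with no divisibility assumptions, precisely because Shearer's covering condition replaces any attempt to tile the torus by disjoint $k\times l$ blocks; the plain-subadditivity variant you mention would only cover the case $k\mid m$, $l\mid n$.
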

\vspace{-1pt}
\begin{lemma}\label{le:eqT}
If $\bb_{i+1} \in \mathcal{B}(\bp)$ such that $|\bb_{i+1}|_c \geq 2$ does not satisfy (\ref{eq:CIA})
or such that $|\bb_{i+1}|_r \geq 2$ does not satisfy (\ref{eq:CIA2}), then
$\mathcal{T}(\mathcal{B}(\bp), \bp, i+1) = \mathcal{T}(\mathcal{B}(\bp), \bp, i).$
\end{lemma}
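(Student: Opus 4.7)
The plan is to prove the two inclusions separately. The inclusion $\mathcal{T}(\mathcal{B}(\bp), \bp, i+1) \subseteq \mathcal{T}(\mathcal{B}(\bp), \bp, i)$ is immediate from the monotonicity statement made just after the definition of $\mathcal{T}(\mathcal{B}(\bp), \bp, i)$, so the substance of the lemma lies in the reverse inclusion. I would fix a primitive $\bq \in \mathcal{T}(\mathcal{B}(\bp), \bp, i)$, so that $N(\bb_j \mid \bq) = N(\bb_j \mid \bp)$ for every $1 \leq j \leq i$, and reduce the claim to showing $N(\bb_{i+1} \mid \bq) = N(\bb_{i+1} \mid \bp)$.

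For the column case $|\bb_{i+1}|_c \geq 2$, write $\bb_{i+1} = \ba : \bw : \bc$ with $\bw = \sigma_c(\pi_c(\bb_{i+1}))$. The discussion following (\ref{iq:Cii}) already shows that the left-hand side of (\ref{eq:CIA}) equals the gap between the upper and lower bounds for $N(\bb_{i+1})$ in (\ref{iq:Cii}). When (\ref{eq:CIA}) fails, this gap collapses to $0$, forcing
\[ N(\bb_{i+1} \mid \bp) = \min\{ N(\ba:\bw \mid \bp),\ N(\bw:\bc \mid \bp) \}. \]
Both $\ba:\bw$ and $\bw:\bc$ have the same height as $\bb_{i+1}$ but strictly smaller width, so by the ordering rule governing $\mathcal{B}(\bp)$ they precede $\bb_{i+1}$; hence their counts under $\bq$ agree with those under $\bp$. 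Because (\ref{iq:Cii}) is a purely combinatorial identity valid for any primitive block, it applies to $\bq$ as well, and since the four quantities appearing inside the $\min$ of (\ref{eq:CIA}) take the same values at $\bq$ and at $\bp$, (\ref{eq:CIA}) also fails for $\bq$. The same squeeze then yields $N(\bb_{i+1} \mid \bq) = \min\{ N(\ba:\bw \mid \bq),\ N(\bw:\bc \mid \bq) \} = N(\bb_{i+1} \mid \bp)$, as required.

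The row case $|\bb_{i+1}|_r \geq 2$ with (\ref{eq:CIA2}) failing is handled symmetrically: decompose $\bb_{i+1} = \be / \bv / \bg$, use (\ref{iq:Cii2}) in place of (\ref{iq:Cii}), and note that $\be/\bv$ and $\bv/\bg$ precede $\bb_{i+1}$ in the ordering by virtue of having strictly smaller height. The main obstacle I anticipate is the bookkeeping needed to certify that every auxiliary count appearing inside the squeeze — notably $N(\bw)$ and the two sums $\sum_{\bd \neq \bc} N(\bw:\bd)$ and $\sum_{\bb \neq \ba} N(\bb:\bw)$ in (\ref{iq:Cii}), together with their row analogues — is itself determined by blocks strictly earlier than $\bb_{i+1}$. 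This is resolved by rewriting such sums via (\ref{eq:CD2}) and (\ref{eq:CD3}) in terms of $N(\bw)$ and the shorter counts $N(\bb:\bw)$, $N(\bw:\bd)$ individually, all of which have width or height strictly less than that of $\bb_{i+1}$ and hence are already matched on $\bq$ by the hypothesis $\bq \in \mathcal{T}(\mathcal{B}(\bp), \bp, i)$.
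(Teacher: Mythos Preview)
The paper explicitly omits the proof of this lemma, so there is no authors' argument to compare against. Your reconstruction is essentially the intended one: the inclusion $\supseteq$ is the content, and it follows because failure of (\ref{eq:CIA}) (resp.\ (\ref{eq:CIA2})) collapses the bounds in (\ref{iq:Cii}) (resp.\ (\ref{iq:Cii2})) to a single value expressible through counts of strictly earlier blocks, so any $\bq$ agreeing with $\bp$ on $\bb_1,\dots,\bb_i$ must also agree on $\bb_{i+1}$.

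Two small points worth tightening. First, your appeal to ``the shorter counts $N(\bb\!:\!\bw)$, $N(\bw\!:\!\bd)$ individually'' is more than you need and not quite justified: for an arbitrary column $\bb$ the block $\bb\!:\!\bw$ need not lie in $\mathcal{B}(\bp)$ at all, so its count is not directly controlled by the hypothesis. The clean route is the one you hint at: use (\ref{eq:CD2}) to rewrite $\sum_{\bd\neq\bc}N(\bw\!:\!\bd)=N(\bw)-N(\bw\!:\!\bc)$ and $\sum_{\bb\neq\ba}N(\bb\!:\!\bw)=N(\bw)-N(\ba\!:\!\bw)$, so that the entire squeeze in (\ref{iq:Cii}) involves only $N(\bw)$, $N(\ba\!:\!\bw)$, $N(\bw\!:\!\bc)$. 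Second, you should check explicitly that these three blocks actually belong to $\mathcal{B}(\bp)$ and precede $\bb_{i+1}$ under the \emph{reordered} enumeration $(\mathcal{B}_0,\mathcal{B}_1,\mathcal{B}_2,\mathcal{B}_3)$, not just the raw height--width order. Membership follows because $\bw=\sigma_c(\pi_c(\bb_{i+1}))\in\mathcal{D}(\bar{\bp})$ by hypothesis and subblocks of elements of $\mathcal{D}(\bar{\bp})$ remain in $\mathcal{D}(\bar{\bp})$; precedence holds because passing to a block of the same height and strictly smaller width can only move it from $\mathcal{B}_3$ to $\mathcal{B}_2$ or $\mathcal{B}_1$, or keep it in the same part with a smaller width index. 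The row case is symmetric via (\ref{eq:CD3}).
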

\begin{lemma}\label{le:Exp}
\begin{align*}
&\limsup_{m, n\rightarrow \infty}-\frac{1}{KL}\!\!\!\!\!\!\!\!\sum_{\bw\in\cX^{[K, L]}}\!\!\!\!\!\!\!E\left[\frac{N(\bw\,|\,X^{[m, n]})}{mn}\right]\log_2 E\left[\frac{N(\bw\,|\,X^{[m, n]})}{mn} \right]\\
&= \hat{H}(\mathbf{X}).
\end{align*}
\end{lemma}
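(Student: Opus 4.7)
The plan is to identify the left-hand side as the normalized Shannon entropy of a random $K\times L$ subblock of the flat torus, and then sandwich it between two quantities whose common $\limsup$ is $\hat{H}(\mathbf{X})$. Let $(I,J)$ be uniformly distributed on $\{1,\dots,m\}\times\{1,\dots,n\}$ and independent of $X^{[m,n]}$, and let $W_{i,j}$ denote the $K\times L$ subblock of the flat torus $(X^{[m,n]})^T$ anchored at $(i,j)$. Setting $Y:=W_{I,J}$, definition (\ref{eq:N}) yields
\[
P(Y=\bw)=\frac{1}{mn}\sum_{i=1}^{m}\sum_{j=1}^{n}P(W_{i,j}=\bw)=E\!\left[\frac{N(\bw\mid X^{[m,n]})}{mn}\right],
\]
so the expression in the lemma equals $H(Y)/(KL)$.

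For the lower bound $\liminf H(Y)/(KL)\ge \hat{H}(\mathbf{X})$, I would combine concavity of the entropy functional with a tiling argument. Concavity of $H$ applied to the mixture above gives $H(Y)\ge \frac{1}{mn}\sum_{i,j}H(W_{i,j})$. For each offset $(i_0,j_0)\in\{1,\dots,K\}\times\{1,\dots,L\}$, the tiles $\{W_{i_0+aK,\,j_0+bL}:0\le a<\lfloor m/K\rfloor,\,0\le b<\lfloor n/L\rfloor\}$ are pairwise disjoint on the torus and together reconstruct $X^{[m,n]}$ except for $O(Km+Ln)$ boundary cells, so subadditivity of entropy gives $H(X^{[m,n]})\le \sum_{a,b}H(W_{i_0+aK,\,j_0+bL})+O((Km+Ln)\log|\cX|)$. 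Averaging over the $KL$ offsets, each anchor $(i,j)\in\{1,\dots,m\}\times\{1,\dots,n\}$ is visited exactly once, producing $KL\cdot H(X^{[m,n]})\le \sum_{i,j}H(W_{i,j})+O(KL(Km+Ln)\log|\cX|)$. Dividing by $mn\cdot KL$ and using $K=o(m)$, $L=o(n)$ yields $H(Y)/(KL)\ge H(X^{[m,n]})/(mn)-o(1)$, and taking $\limsup$ delivers the lower bound.

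The matching upper bound $\limsup H(Y)/(KL)\le \hat{H}(\mathbf{X})$ is the main obstacle, because the naive estimates $H(Y)\le H(X^{[m,n]})+\log(mn)$ and $H(Y)\le KL\log|\cX|$ are both too loose by factors that blow up with $m,n$. The essential ingredient is the very slow growth of $K,L$: since $KL=O(\log_{|\cX|}\log_{|\cX|}\min(m,n))$, we have $|\cX|^{KL}=O(\log\min(m,n))$, so the alphabet of $Y$ is polylogarithmic in $mn$. My plan is to exploit this through concentration of the empirical distribution $\hat{Q}_{X^{[m,n]}}(\bw):=N(\bw\mid X^{[m,n]})/(mn)$ about its mean $Q^{[m,n]}(\bw)$. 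Each cell of $X^{[m,n]}$ influences only $KL$ subblocks, so McDiarmid's bounded-difference inequality gives $\|\hat{Q}_{X^{[m,n]}}-Q^{[m,n]}\|_1=O(\sqrt{|\cX|^{KL}/(mn)})=o(1)$ with overwhelming probability, and the Csisz\'ar--K\"orner continuity bound for entropy (applied on a support of size $\le|\cX|^{KL}=O(\log mn)$) transfers this to $|H(\hat{Q}_{X^{[m,n]}})-H(Y)|/(KL)=o(1)$ in expectation. Combining this with Lemma~\ref{le:cmb} applied to $\bp=X^{[m,n]}$ together with the chain rule $H(X^{[m,n]})=H(\hat{Q}_{X^{[m,n]}})+H(X^{[m,n]}\mid \hat{Q}_{X^{[m,n]}})$, where the first term is at most $|\cX|^{KL}\log(mn+1)=O((\log mn)^2)=o(mn)$ and the second term is at most $\frac{mn}{KL}E[H(\hat{Q}_{X^{[m,n]}})]$, relates $E[H(\hat{Q}_{X^{[m,n]}})]/(KL)$ back to $H(X^{[m,n]})/(mn)$ up to a vanishing correction. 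Chaining these estimates yields $H(Y)/(KL)\le H(X^{[m,n]})/(mn)+o(1)$, and taking $\limsup$ completes the proof when combined with the lower bound.
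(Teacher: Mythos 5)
Your opening step---rewriting the quantity as $H(Y)/(KL)$ for a $K\times L$ subblock $Y$ drawn at an independent uniform position---matches the paper's first move, which replaces the torus counts $N(\bw\mid X^{[m,n]})/(mn)$ by window counts up to a boundary correction $\delta$ that vanishes because $K$ and $L$ grow only like $\sqrt{\log_{|\cX|}\log_{|\cX|}m}$; and your tiling/subadditivity argument for $\limsup H(Y)/(KL)\ge \hat{H}(\mathbf{X})$ is essentially sound (more detailed, in fact, than anything the paper writes down). The genuine gap is in your upper bound, and it is twofold. First, McDiarmid's bounded-difference inequality needs independent (or otherwise explicitly controlled) coordinates; a general source $\mathbf{X}$ carries no such structure, so the claimed concentration of the empirical distribution $\hat{Q}_{X^{[m,n]}}$ about its mean is unjustified---already for a mixture of two stationary sources with different statistics it fails. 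Second, even granting concentration, your counting step points the wrong way: the chain rule together with Lemma~\ref{le:cmb} gives $H(X^{[m,n]})\le o(mn)+\frac{mn}{KL}E\bigl[H(\hat{Q}_{X^{[m,n]}})\bigr]$, which is a \emph{lower} bound on $E[H(\hat{Q})]/(KL)$ in terms of $H(X^{[m,n]})/(mn)$; chained with $H(Y)\approx E[H(\hat{Q})]$ it merely reproduces the lower bound you already have, and nothing in it yields $H(Y)/(KL)\le H(X^{[m,n]})/(mn)+o(1)$.

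No patch within your framework can close this, because for a literally general source the inequality you are after is false: take $X^{[m,n]}$ deterministic, equal to a fixed pattern whose empirical $K\times L$ statistics are nearly uniform over $\cX^{[K,L]}$; then $\hat{H}(\mathbf{X})=0$ while the left-hand side of the lemma is close to $\log_2|\cX|$. The paper obtains this direction not by concentration but by identifying the position-averaged window marginal $E[N'(\bw\mid X^{[m,n]})]/(m'n')$ with $P_{X^{[m,n]}}(\bw)$ and then with the distribution of $X^{[K,L]}$, so that the sum becomes $H(X^{[K,L]})/(KL)\rightarrow\hat{H}(\mathbf{X})$; this identification amounts to a shift-invariance/consistency property of the marginals, which holds in the stationary case singled out after Theorem~\ref{th:CADF} but is exactly what a general source need not satisfy. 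So the missing idea is that identification (equivalently, an explicit stationarity-type hypothesis); with it, the expectation already sits inside the entropy and the whole concentration apparatus is unnecessary, and without it the upper-bound half of the lemma cannot be established at all.
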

\begin{proof}
For $\bw \in \cX^{[K, L]}$,
$P_{X^{[m, n]}}(\bw)$ can be written by
\begin{align*}
&E\left[\frac{|\{(i, j)\text{ s.t. }X_{(i, j)}^{(i\!+\!K\!-\!1, j\!+\!L\!-\!1)}\!=\!\bw,1\!\leq\!i\!\leq\!m', 1\!\leq\!j\!\leq\!n'\}|}{m'n'}\right]
\end{align*}
where $m'$ and $n'$ are $m\!-\!K\!+\!1$ and $n\!-\!L\!+\!1$, respectively, and $(i, j)$ is a coordinate.
For $\bp$, let $N'(\bw\,|\,\bp)$ be $|\{(i, j)\text{ s.t. }\bp_{(i, j)}^{(i\!+\!K\!-\!1, j\!+\!L\!-\!1)}\!=\!\bw, 1\!\leq\!i\!\leq\!m', 1\!\leq\!j\!\leq\!n'\}|$.
Moreover, $\frac{N(\bw\,|\,\bp)}{mn}$ can be written by 
$\left(\frac{N'(\bw\,|\,\bp)+\delta}{m'n'}\right)\left(\frac{m'n'}{mn}\right)$
where $0\!\le\!\delta\!\le\!(K\!-\!1)(n\!-\!L\!+\!1)\!+\!(L\!-\!1)m$ from~(\ref{eq:N}). 
Since $K$ and $L$ are respectively $\lfloor\sqrt{\log_{|\cX|}\log_{|\cX|} m}\rfloor$ and $\lfloor\sqrt{\log_{|\cX|}\log_{|\cX|} n}\rfloor$, 
$\frac{N(\bw|\bp)}{mn}$ converges to $\frac{N'(\bw|\bp)}{m'n'}$ as $m$ and $n$ go to infinity.
Since $E\left[\frac{N'(\bw|X^{[m, n]})}{m'n'}\right]=P_{X^{[m, n]}}(\bw)$,
\begin{align*}
&\limsup_{m, n\rightarrow \infty}-\frac{1}{KL}\!\!\!\!\!\!\!\!\sum_{\bw\in\cX^{[K, L]}}\!\!\!\!\!\!\!\!E\left[\frac{N(\bw\,|\,X^{[m, n]})}{mn}\right]\log_2 E\left[\frac{N(\bw\,|\,X^{[m, n]})}{mn} \right]\\
&= \limsup_{m, n\rightarrow \infty}-\frac{1}{KL}\!\!\!\!\!\!\!\!\sum_{\bw\in\cX^{[K, L]}}\!\!\!\!\!P_{X^{[m, n]}}(\bw)\log_2 P_{X^{[m, n]}}(\bw)\\
&= \limsup_{m, n\rightarrow \infty}\frac{H(X^{[K, L]})}{KL} = \hat{H}(\mathbf{X}).
\end{align*}
\end{proof}

\begin{proof}[(Proof of Theorem~\ref{th:CADF})]
As for $\ell_0(\bp)$, from the assumption, since $m\le n$, 
$\ell_{0}(\bp)\!\leq2(\log_2 n\!+\!2\log_2\log_2 n\!+\!7)\!+\!\lceil \log_2 mn \rceil$
where $(\log_2 n\!+\!2\log_2\log_2 n\!+\!7)$ and $\lceil \log_2 mn \rceil$
are costs of Elias integer code for $n$ and $\epsilon$(\text{rank($\bp$)}), respectively. 
As for $\ell_{1}(\bp)$, the cost of $N(\bb_i)$ in (P-i) is $\lceil\log_2 mn\rceil $ bits from (\ref{eq:PP0}),
so that $\ell_{1}(\bp)\!\leq\!(|\cX|\!-\!1)\lceil\log_2 mn\rceil$. 
As for $\ell_{2}(\bp)$, since $I(\bb_i) \le mn$ and $I'(\bb_i) \le mn$, 
costs of $I(\bb_i)$ and $I'(\bb_i)$ are at most $\log_2 mn $ bits. 
Moreover, since $m \le n$ and $K \le L$, 
\begin{align*}
\ell_{2}(\bp)&\leq\!\sum_{h = 1}^K \!\sum_{w = 1}^{L}|\cX|^{wh}\log_2 mn \leq L^2|\cX|^{L^2}\log_2 mn\\
&\leq 2(\log_{|\cX|}\log_{|\cX|}n)(\log_{|\cX|} n)(\log_2 n).
\end{align*}
Therefore, 
\begin{align}
\lim_{m, n\rightarrow \infty} (\ell_{0}(\bp)+\ell_{1}(\bp)+\ell_{2}(\bp))/mn = 0. \label{eq:pT01}
\end{align}

As for $\ell_{3}(\bp)$, from (\ref{eq:PP2}), 
cost of $N(\bb_i)$ is $-\log_2 (|\mathcal{T}(\mathcal{B}(\bp), \bp, i)|/|\mathcal{T}(\mathcal{B}(\bp), \bp, i\!-\!1)|)$
bits. 

Cost of the next encoded $N(\bb_j)$ such that $N(\bb_i)$ has been encoded immediately before $N(\bb_j)$ is  
$-\log_2 (|\mathcal{T}(\mathcal{B}(\bp), \bp, j)|/|\mathcal{T}(\mathcal{B}(\bp), \bp, j\!-\!1)|)$.
From Lemma~\ref{le:eqT}, $|\mathcal{T}(\mathcal{B}(\bp), \bp, j\!-\!1)|\!=\!|\mathcal{T}(\mathcal{B}(\bp), \bp, i)|$.
Therefore, $N(\bb_j)$ can be written by 
$-\log_2(|\mathcal{T}(\mathcal{B}(\bp), \bp, j)|/|\mathcal{T}(\mathcal{B}(\bp), \bp, i)|)$,
Hence, the denominator $|\mathcal{T}(\mathcal{B}(\bp), \bp, i)|$ for $\bp_j$ is equal to 
the previous numerator $|\mathcal{T}(\mathcal{B}(\bp), \bp, i)|$ for $\bb_i$, so that they are canceled.
Moreover, since $|\mathcal{T}(\mathcal{B}(\bp), \bp, |\mathcal{B}(\bp)|)|\!=\!|[\bp]|\!=\!\!mn$,
\begin{align}
\ell_{3}(\bp) = \log_2 |\mathcal{T}(\mathcal{B}(\bp), \bp, S\!-\!1)| - \log_2 mn.\label{eq:C3}
\end{align}
where $S$ is the index of the first block $\bb_S \in \mathcal{B}_3(\bp)$ which is encoded by arithmetic coding.
From Lemma~\ref{le:eqT}, 
$|\mathcal{T}(\mathcal{B}(\bp), \bp, S\!-\!1)|\!=\!|\mathcal{T}(\bp, K, L)|$.
Therefore, 
\begin{align}
\ell_{3}(\bp) = \log_2 |\mathcal{T}(\bp, K, L)| - \log_2 mn.\label{eq:C4}
\end{align}
From (\ref{eq:C4}) and Lemma~\ref{le:cmb},
\begin{align}
&\ell_{3}(\bp) \leq -\frac{mn}{KL}\!\!\!\!\sum_{\bw \in \cX^{[K,L]}}\!\!\!\frac{N(\bw)}{mn}\log_2\frac{N(\bw)}{mn}- \log_2 mn.\label{eq:C5}
\end{align}
Therefore, 
\begin{align}
&E\left[\frac{\ell_{3}(X^{[m, n]})}{mn}\right]\!\leq\nonumber\\
&-\frac{1}{KL}\!\!\!\!\!\!\!\sum_{{\bw}\in \cX^{[K, L]}}\!\!\!\!\!\!\!E\left[\frac{N(\bw|X^{[m, n]})}{mn}\log_2\frac{N(\bw|X^{[m, n]})}{mn}\right]\!-\!\frac{\log_2 mn}{mn}.\nonumber
\end{align}
From Jensen's inequality, $E[\frac{N(\bw|X^{[m, n]})}{mn}]E[\log_2\frac{N(\bw|X^{[m, n]})}{mn}] \le E[\frac{N(\bw|X^{[m, n]})}{mn}\log_2\frac{N(\bw|X^{[m, n]})}{mn}]$. Therefore, from Lemma~\ref{le:Exp},
\begin{align}
\limsup_{m, n\rightarrow \infty}E\left[\frac{\ell_{3}(X^{[m, n]})}{mn}\right]\!&\leq\!\hat{H}(\mathbf{X}).\label{eq:T02}
\end{align}
From (\ref{eq:pT01}) and (\ref{eq:T02}), 
\begin{align}
\limsup_{m, n\rightarrow \infty}E\left[\frac{\ell(X^{[m, n]})}{mn}\right]\!&\leq\!\hat{H}(\mathbf{X}).\label{eq:T03}
\end{align}
The proposed code is a prefix code, so that Kraft's inequality is satisfied. Therefore, 
$\limsup_{m, n\rightarrow \infty}E\left[\frac{\ell(X^{[m, n]})}{mn}\right]\!\ge\!\hat{H}(\mathbf{X})$.
\end{proof}
From Remark 1.7.3~\cite{Han02}, if $\mathbf{X}$ is a stationary source, $\hat{H}(\mathbf{X})$ can be expressed by
$H(\mathbf{X})(:=\lim_{m, n \rightarrow \infty} \frac{H(X^{[m, n]})}{mn})$, that is the entropy rate of $\mathbf{X}$.
Therefore, if $\mathbf{X}$ is a stationary source, the average codeword length of the proposed algorithm converges
to $H(\mathbf{X})$ as $m$ and $n$ go to infinity.

\section{Conclusion \label{conclusion}}
For reducing computational time, we proposed a new CSE for a 2D source which utilizes the flat torus of the source
while the conventional CSE utilizes the circular string of the source as a probabilistic model.
The total number of output blocks of the new CSE is polynomial
while that of the conventional CSE is exponential with respect to the source size. 
The new CSE encodes the source in block-by-block while the conventional CSE does in line-by-line. 
Moreover, we prove that an upper bound on the average codeword length of the proposed CSE converges to the sup-entropy rate for a general
source as size of the input source goes to infinity.
Furthermore, if a general source is a stationary source, then the length converges to the entropy rate of the source as
the size goes to infinity.
\bibliographystyle{ieeetr}

\end{document}